\let\counterwithin\relax
\definecolor{linkcolor}{HTML}{0645AD}
\definecolor{white}{RGB}{255,255,255}
\definecolor{crimson}{RGB}{220,20,60}
\definecolor{blue}{RGB}{0,0,205}
\definecolor{myblue}{RGB}{80,80,160}
\definecolor{mygreen}{RGB}{80,160,80}
\colorlet{tcrimson}{white!40!crimson}
\colorlet{tblue}{white!40!blue}
\definecolor{porange}{HTML}{EE7F2D}
\def\*#1{\mathbf{#1}}
\newcommand{\E}{\mathbb{E}}
\newcommand{\indep}{\!\perp\!\!\!\perp}
\theoremstyle{definition}
\newtheorem{proposition}{Proposition}%[section]
\newtheorem{corollary}{Corollary}%[theorem]
\newtheorem{lemma}{Lemma}
\newtheorem{assumption}{Assumption}
\newcolumntype{R}{>{\raggedleft\arraybackslash}X}
\newcolumntype{C}{>{\centering\arraybackslash}X}
\newcolumntype{L}{>{\raggedright\arraybackslash}X}
\begin{document}
\title{\textbf{Synthetic Area Weighting for \\ Measuring Public Opinion in Small Areas}\thanks{We thank
Eli Ben-Michael, Avi Feller, Max Goplerud, Kosuke Imai, Hiroto Katsumata, Lauren Kennedy, Chris Kenny, Jonathan Robinson, David Shor, Yajuan Si, Arjun Vishwanath, Shun Yamaya, members of the Imai Research Group, and participants of the 2020 MRP conference at Columbia University for comments and suggestions.
A \textsf{R} package, \texttt{synthArea},  is available to implement the proposed method at \url{https://github.com/kuriwaki/synthArea}.}}
\author{
Shiro Kuriwaki\thanks{PhD candidate, Department of Government, Harvard University. URL: \url{https://www.shirokuriwaki.com/}.}\\
{\small Harvard University}\\
% {\small\href{mailto:kuriwaki@g.harvard.edu }{\texttt{kuriwaki@g.harvard.edu }}}
\and
Soichiro Yamauchi\thanks{PhD candidate, Department of Government, Harvard University. URL: \url{https://soichiroy.github.io/}.}\\
{\small Harvard University}\\
% {\small\href{mailto:syamauchi@g.harvard.edu}{\texttt{syamauchi@g.harvard.edu}}}
}
\date{\small\today}
\maketitle

\setstretch{1.15}

%!TEX root = synthArea.tex

\begin{abstract}
The comparison of subnational areas is ubiquitous but survey samples of these areas are often biased or prohibitively small. 
Researchers turn to methods such as multilevel regression and poststratiﬁcation (MRP) to improve the efficiency of estimates by partially pooling data across areas via random effects. 
However, the random effect approach can pool observations only through area-level aggregates.
We instead propose a weighting estimator, the \emph{synthetic area estimator}, which weights on variables measured only in the survey to partially pool observations \emph{individually}. The proposed method consists of two-step weighting: first to adjust differences across areas and then to adjust for differences between the sample and population. 
Unlike MRP, our estimator can directly use the national weights that are often estimated from pollsters using proprietary information.
Our approach also clarifies the assumptions needed for valid partial pooling, without imposing an outcome model. 
We apply the proposed method to estimate the support for immigration policies at the congressional district level in Florida. Our empirical results show that small area estimation models with insufficient covariates can mask opinion heterogeneities across districts.

\end{abstract}

\clearpage

\setstretch{1.3}

\section{Introduction}

%!TEX root = synthArea.tex

% Prevalence of SAE
The comparison of states, districts, and counties in social science increasingly use survey samples to estimate behavior at subnational units.
For example, scholars have estimated public opinion on the minimum wage,
gay rights, the death penalty,
or anti-smoking regulation \citep{simonovits2019responsiveness, hansen2015symbolic,canes2014judicial, pacheco2012social},
whether that is at the level of states,
congressional districts
or municipal governments and regions \citep[e.g.,][]{lax2012democratic,howe2015geographic,hertel2019legislative, alesina2002trusts,tausanovitch2014representation,putnam2007pluribus}.
The necessary step (or sometimes the main goal) in these studies is the estimation of population opinion in each small area using small samples of survey respondents.

%
% summary of MRP
Despite its importance in empirical studies,
estimating public opinion at the sub-population level is challenging. Estimates suffer from high variance due to small samples,
and the survey may be unrepresentative of the population as-is.
% we could elaborate more on the source of small sample:
A popular solution to this problem is multilevel regression and the post-stratification (MRP) \citep{gelman1997poststratification}.
MRP combines poststratification, which addresses bias due to non-representative sampling, with
traditional small-area estimation, which addresses high variance due to small samples \citep[e.g.,][]{fay1979estimates}.
It allows scholars to pool observations across sub-populations (or small areas) while accounting for heterogeneities via random effects.
The key assumption behind the random effect approach including MRP is that area-level unobservables are independent of the outcome conditional on two types of covariates:
area-level \emph{aggregate} as well as individual-level covariates that are observed both in the population and survey.
In other words, the assumption will not be met when there is an unmeasured individual-level covariate that is associated with the outcome and area-level unobservables,
which leads to biased estimates due to pooling heterogenous areas.

% Our approach in brief
In this paper, we propose a weighting-based approach to small area estimation, which we call the \emph{synthetic area estimator},
that utilizes individual-level covariates for partial pooling.
Specifically, the proposed method enables scholars to incorporate individual-level covariates that are \emph{only observed in the survey}.
While the method still requires independence of the area-level unobservables with the outcome so that we can borrow information across small areas,
it can condition on more variables that the existing random effect approach has not utilized, which makes the independence assumption more plausible.
In addition, we derive a testable condition of the independence assumption in our framework so that researchers can assess if their partial pooling strategy is plausible with the data.

One additional attractive feature of the proposed method is that  analysts can directly use the national post-stratification (calibration) weights that come with most surveys, because the synthetic area estimator can be written as a transformation of this national weights.
These weights provided by survey firms are not calibrated to the small areas, but still encode useful information about the entire population and are excellent candidates for justifying sampling ignorability. While most MRP methods requires that researchers jettison these weights and create new post-stratification weights through a rather laborious process incorporating fewer covariates than the survey firms' weights, the synthetic area method offers an option to use the survey firms' weights directly.

%% --- add core idea behind the method
Theoretically, our approach clarifies the assumptions necessary to justify small area estimators in general.
We show that the small area quantity is nonparametrically identified under \emph{area ignorability},  which assumes independence between the outcome and area indicators given covariates, in addition to the standard  \emph{sampling ignorability}, which assumes independence between the outcome and sample selection given covariates.
The theoretical contribution of this paper is thus to show that the identification formula can incorporate survey-only variables that are not utilized in the previous literature.
In contrast to recent work on MRP, our results provide a nonparametric, as opposed to a model-based, justification to a small area estimation strategy.
Therefore, researchers who intend to analyze small areas through subgroups of surveys can interrogate or justify the validity of their approach by these two conditions.

% Outline and what our empirics are
Finally, we conduct a validation study as well as an empirical application to illustrate the proposed method, implemented via an open-source \textsf{R} software package, \textsf{synthArea}.
First we estimate congressional district level-turnout using our method and MRP using a large survey dataset, and compare it to the ground-truth levels of turnout.
We demonstrate that the proposed method predicts the outcome modestly better than standard MRP.  To the extent that the synthetic area method errs, it errs in the same way as MRP estimates.
We then show how the proposed method can sensibly heterogenous estimates for preferences for public policy, through adjusting for the variables in the survey that MRP cannot leverage.
In our example of the support for deportation of immigrants in Florida, a racially diverse state, adjusting for party identification and race in the survey produces estimates that are consistent with the population distribution of race. In contrast, estimates based only on the joint distribution available in the Census (age, gender, and education) systematically underestimate the diversity across congressional districts.

% \subsection{Related Literature}

This paper contributes to a growing literature on MRP or more generally small area estimation.
The contribution of the proposed method, however, is distinct from recent developments in MRP.
While recent work such as \cite{bisbee2019barp} or \cite{ornstein2020stacked} propose a complex regularization approach to
the partial pooling strategy and improves upon the original random effect approach, none of the methods do not consider incorporating individual-level survey-only covariates to improve the validity of partial pooling (also see \cite{goplerud2018sparse,montgomery2018tree}).
The problem we consider in this paper is also different from the literature on synthetic \emph{populations}, that attempts to impute the joint distribution of sub-population targets \citep{kastellec2015polarizing,leemann2017extending,lauderdale2020model,ghitzasteitz2020}.
Our paper treats the poststratification part as given, and improves the partial pooling strategy. Therefore, these methods can be used together with synthetic area estimation to improve the quality of poststratification.

% Broader relationship to literatures
Our method is also closely connected to the literature
on the subgroup-specific unobservables in causal inference.
In the context of accounting for unobserved cluster-level heterogeneity,
\cite{arkhangelsky2018role} derive a similar condition to the area ignorability
and show that the empirical summary of each cluster can capture the unobserved heterogeneity across clusters.
However, their target estimand is the average treatment effect (or the population mean), and therefore their method is not directly applicable to estimate small area (i.e., cluster-specific) quantities.
\cite{ben2020varying} propose a weighting-based approach to compute the subgroup quantities, which partially pools observations to estimate weights that are subgroup specific \citep[also see][]{dong2020subgroup}. Although their weights have a smaller variance, the target quantities still suffer from the small sample problem because the method uses only observations from the specific subgroup for the final estimation,
whereas our method utilizes all observations with an additional assumption.

\section{The Small Area Estimation Problem}\label{sec:sae}

%!TEX root = synthArea.tex

Researchers start with a survey sample of $n$ observations taken from an infinite population.
Let $S \in \{0, 1\}$ denote the survey inclusion indicator
that takes $1$ if a unit is selected into the survey
and $0$ otherwise.
We are interested in units living in sub-areas of the population
defined by the categorical variable $A \in \{1, \ldots, J\}$.
In the context of our validation study,
$A_{i}$ indicates the congressional district the $i$th unit resides in.
In the population, we observe covariates which we denote by $\*X^{P}$ (e.g., basic population demographics given by Census).
In addition to those variables, the survey measures the outcome $Y$,
and additional covariates not observed in the population (e.g., interest in politics or partisan self-identification)
which we denote by $\*X^{S}$. The distinction between population and survey-only covariates will be important to illustrate the difference between MRP and our proposed method. We summarize this notation in Table \ref{tab:X-types}.

\begin{table}[t]
\centering
\small
\caption{\textbf{Types of Covariates in Small Area Estimation}\label{tab:X-types}}
\begin{tabularx}{0.9\linewidth}{cccCC}
\toprule
& & & \multicolumn{2}{c}{Incorporated during Partial Pooling}\\\cmidrule(lr){4-5}
& & Examples&  MRP & synthArea\\
\midrule
Population Variables & $\*X^{P}_i$ & Age, Gender, Race & $\checkmark$ & $\checkmark$\\
Survey-only Variables & $\*X^{S}_i$ & Party ID, News Interst & & $\checkmark$\\
% Area-level Variables & $\bar{X^A}$ & District Voteshare & $\checkmark$ & $\checkmark$\\
\bottomrule
\end{tabularx}
\bigskip

\footnotesize
\begin{minipage}{0.9\linewidth}
\emph{Note}: We distinguish between two sets of individual variables in this study. Population variables $\*X^{P}$ (also referred to as Census variables) are those for which popualtion quantities are observed. Survey-only varibales $\*X^{S}$ are those only measured in the survey sample. Within each type of variable, variables must be observed jointly (e.g. the joint distribution of age, gender, and race in the population). Variables for which only aggregate marginal distributions are available are denoted as $\overline{\*X}^{A}_{j}.$
\end{minipage}
\end{table}

We are interested in estimating population mean of the outcome in the target area,
\begin{equation}\label{eq:estimand}
\tau_{j} = \E[Y \mid A = j]
\end{equation}
for area $j = 1, \ldots, J$.
We denote a vector of area-specific population means by $\bm{\tau} = (\tau_{1}, \tau_{2}, \ldots, \tau_{J})^{\top}$.
In our example, $\tau_{j}$ corresponds to turnout at the congressional district level.

\subsection{The Direct Estimator}\label{subsec:direct}

In standard (i.e. non-small area) survey inference,
we attempt to estimate $\tau_{j}$ using a sample from the target area.
This strategy is justified when the selection into the survey is independent of the outcome of interest,
possibly given observed characteristics.
Assumption~\ref{assump:sampling-ignorability} formally states the assumption,
\begin{assumption}[Sampling ignorability]\label{assump:sampling-ignorability}
  The following two conditions hold:
  \begin{enumerate}
    \item[(a)] Conditional independence: $Y \indep S \mid  \*X^{P}, A$.
    \item[(b)] Overlap: $0 < \Pr(S = 1 \mid \*X^{P}, A) < 1$ for all $\*X^{P} \in \mathcal{X}^{P}$.
  \end{enumerate}
\end{assumption}

Assumption~\ref{assump:sampling-ignorability} (a) states that
among residents in the target area $(A_{i} = j)$,
the selection into the survey
is independent of the outcomes of interest given covariates $\*X^{P}_{i}$.
Conditioning on the area indicator $A$ allows for the presence of area-specific unobservables that
are correlated with the outcome and the selection.
This is a fundamental assumption that is required for most of survey inference,
though recent studies show that its validity
is sometimes questionable \citep[e.g.,][]{bailey2017designing,meng2018statistical}.
Assumption~\ref{assump:sampling-ignorability} (b)
states that all covariate profiles have a positive probability of being sampled into the survey.
This assumption might be violated if the survey cannot sample some particular group of people
in the population.

Under Assumption~\ref{assump:sampling-ignorability},
we can nonparametrically identify $\tau_{j}$ by using
survey observations from the target area,
\begin{equation}\label{eq:identification-direct}
\tau_{j} = \E\bigg\{
\frac{\bm{1}\{S = 1, A = j\}}{\Pr(A = j)}\frac{Y}{\pi_{j}(\*X)}
\bigg\}
\end{equation}
where  $\pi_{j}(\*X_{i}) \equiv \Pr(S_{i} = 1 \mid \*X^{P}_{i}, A_{i} = j)$
is the sampling probability for area $j$.
The estimator is then given as the sample analog of the above expression.
Following the convention in small area estimation \citep{Ghosh2020}, we refer this estimator as the \emph{direct estimator} because it directly uses observations from the area of interest.

We can express the direct estimator $\widehat{\tau}^{\texttt{dir}}$ as a weighted average of the following inverse probability weighting (IPW) form,
\begin{equation}\label{est:direct}
\widehat{\tau}^{\texttt{dir}}_{j} = \sum_{i = 1}^{n}\widehat{w}^{\texttt{dir}}_{ij} Y_{i},
\quad \text{where}\quad
\widehat{w}^{\texttt{dir}}_{ij} \propto
\bm{1}\{A_{i} = j\} \frac{1}{\widehat{\pi}_{j}(\*X_{i})}  +
\bm{1}\{A_{i} \neq j\} \cdot 0.
\end{equation}
The above expression illustrates that the direct estimator uses  observations from the target area, and  places the weight of zero on observations in the other areas, $A_{i} \neq j$.

Even with Assumption~\ref{assump:sampling-ignorability} which gives unbiasedness, the limitation of the direct estimator is its high variance. The sample size of $A_{i} = j$ is often small. Using a standard formula for a standard error for a proportion, a direct estimator for an area with $n_j = 100$ has a margin of error of up to 10 percentage points assuming a simple random sample.  Upon applying post-stratification weights, the standard error of the estimator becomes even larger.

An additional complication is that the weights $\pi_{j}(\*X)$ are specific to the area (as  indicated by the subscript $j$), so they have to be computed for each area separately.
As the estimate of weights is itself a function of a small number of observations,
this further contributes to the large variability in the final estimate.

\subsection{Small Area Estimators and MRP}

To overcome the issues associated with the direct estimator,
the small area estimator employs the random effect approach to ``borrow information'' from other areas, also described as partial pooling \citep{ghosh1994small,rao2014small}.
The key idea is to assign non-zero weight to observations in other areas for estimating the area-specific quantity (i.e., $\tau_{j}$), and thus address the limitations of the direct estimator.
Multilevel regression and post-stratification (MRP) combines the random effect shrinkage approach in classical small-area estimation with covariate adjustment to account for non-representative sampling \citep{gelman1997poststratification,park2004bayesian}.

To illustrate the core intuition behind the random effects approach to partial pooling, we first consider the following linear outcome model.
\begin{equation}\label{eq:mrp-outcome}
Y_{i} = \alpha_{A_{i}} + \bm{\beta}^{\top}\*X^{P}_{i} + \epsilon_{i}, \quad
\epsilon_{i} \sim \mathcal{N}(0, \sigma^{2})
\end{equation}
where $\alpha_{j}$ denotes an area-specific intercept.
The idea behind MRP does not depend on the linearity, so we keep the linear model (instead of logistic regression, which is more common in practice) for a simple exposition of the method.

Under Assumption~\ref{assump:sampling-ignorability}, we can consistently estimate parameters in the model (Equation~\ref{eq:mrp-outcome}) using survey data.
But, the estimate of the area-specific intercepts still suffers from the small-sample problem because it should be estimated based on samples from each area separately.

The random effect approach gives a consistent estimator for each area but with the cost of a new assumption. We first posit that area-specific parameter $\alpha_{j}$ is generated from a common distribution conditional on the area-level variables $\overline{\*X}^{A}_{j}$,
\begin{equation}\label{eq:mrp-area-re}
\alpha_{j} = \alpha + \bm{\gamma}^{\top}\overline{\*X}^{A}_{j} + e_{j}, \quad e_{j}\sim \mathcal{N}(0, \sigma^{2}_{\alpha}).
\end{equation}

How can this estimator be consitent for the estimand? The formulation in \ref{eq:mrp-area-re} implies that the area-specific parameter $\alpha_{j}$ takes a similar value when two areas share similar area-level profiles $\overline{\*X}^{A}_{j}$, such as the proportion White or proportion of voters above age 65.
Additional differences between two areas with similar values of $\overline{\*X}^{A}_{j}$ are attributed to a random error $e_{j}.$  The new assumption is that $e_{j}$ is independent of $\overline{\*X}^{A}_{j}$. We  will formalize this general assumption when we discuss our proposed alternative partial pooling estimator. Details on the assumptions needed for random effect models are discussed in \citet{wooldridge2010econometric} and \citet{arkhangelsky2018role}, as well as a treatment for MRP specifically by \citet{si2020use}.

The random effect approach effectively eliminates the area-specific parameter $\alpha_{j}$ from the model,
and therefore addresses the small-sample problem associated with it.
To see this in our running model, combining Equation~\eqref{eq:mrp-outcome} and \eqref{eq:mrp-area-re},
we obtain the following equation,
\begin{align*}
Y_{i} =  \alpha + \bm{\beta}^{\top}\*X^{P}_{i} + \bm{\gamma}^{\top}\overline{\*X}^{A}_{A_{i}}
 + \widetilde{\epsilon}_{i}
\end{align*}
where $\widetilde{\epsilon}_{i} = \epsilon_{i} + e_{A_{i}}$.
This equation suggests that the predicted outcome for unit $i$ and $i'$ will be similar as long as their population covariate $\*X^{P}$ and area-level variable $\overline{\*X}^{A}$ take similar values
even when they live in different areas (i.e, $A_{i} \neq A_{i'}$).

Although it is effective to address the small-sample problem,
the limitation of the random effect approach is that the partial-pooling occurs at the aggregate level, instead of a finer-grained individual level.
As Equation~\eqref{eq:mrp-area-re} indicates, the random effect models treat two areas ``similar'' when $\overline{\*X}^{A}_{j}$ is close to $\overline{\*X}^{A}_{j'}$.
However, this does not guarantee that individuals living in two areas are similar (beyond characteristics observed in $\*X^{P}$). In fact, the aggregate similarity \emph{does not} imply similarities at the individual level.
Therefore, the random effect approach could potentially pool areas that are similar at the aggregate level but heterogenous at the individual level.

\section{The Synthetic Area Estimator}\label{sec:method}

%!TEX root = synthArea.tex

Random effects are not the only way to partially pool estimators. We submit that weighting methods can be combined to achieve the same effect, and that this approach can incorporate more covariates to make partial pooling more plausible. This section presents the proposed methodology.
In Section~\ref{sec:area-ignorability}, we formalize the assumption for partial-pooling approach which generalizes what we described in Section~\ref{sec:sae}.
Based on that assumption, we propose a new weighting based estimator that can incorporate individual-level covariates.

\subsection{The Area Ignorability Assumption for Partial Pooling}\label{sec:area-ignorability}

The challenge associated with borrowing information across areas is set up in Figure~\ref{fig:sample-space}, where we represent the simple case when there is only one area of interest.  That leads to the partition of  the data space into four quadrants by the value of $S_{i} \in \{0, 1\}$ (population and sampled) and $A_{i} \in \{0, 1\}$ (outside the target area and inside the target area).

\begin{figure}[tb]
  \centerline{
    \includegraphics[scale=0.85]{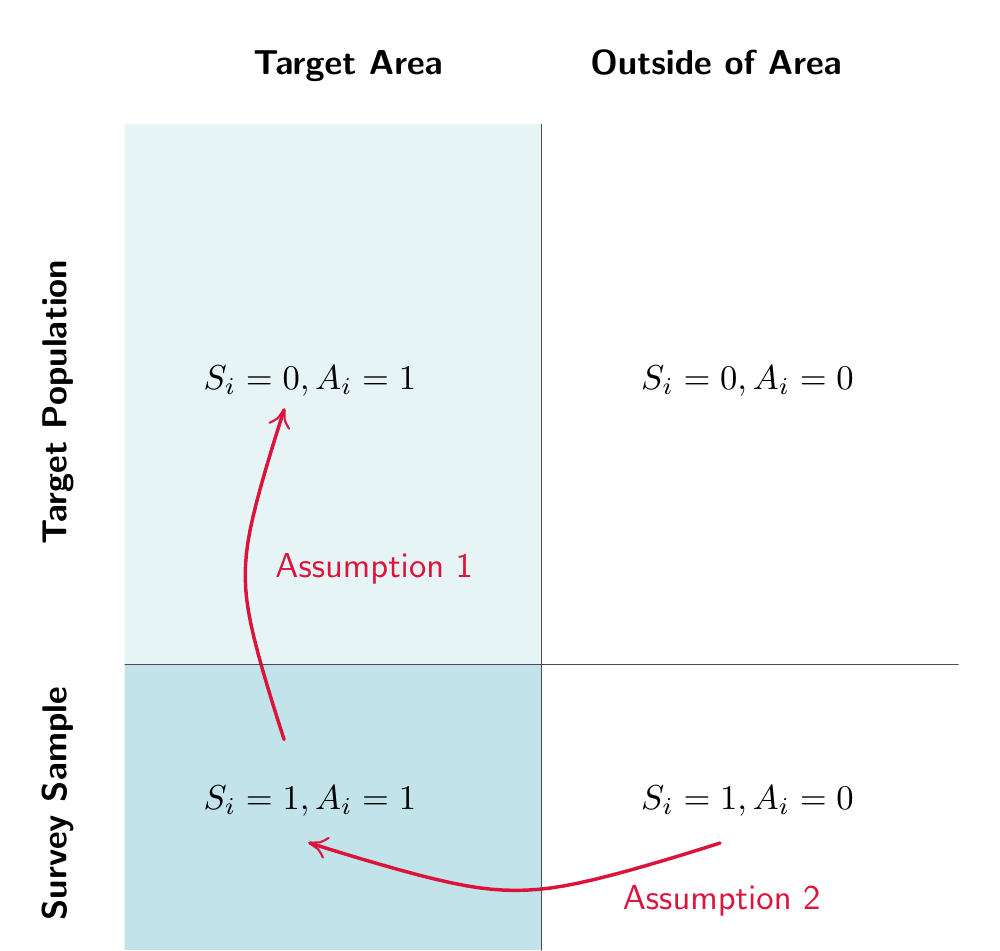}
  }
  \caption{\textbf{
    Graphical Summary of Assumption~\ref{assump:sampling-ignorability} and \ref{assump:area-ignorability}}.
    The figure represents the necessary assumptions for a weighting approach to small area estimation.  The researcher is interested in estimating the population quantity in a target (small) area, represented in the top-left quadrant. Non-small area problems are involved with adjusting the survey sample of the target area to population covariates, as shown in Assumption \ref{assump:sampling-ignorability}. Assumption \ref{assump:area-ignorability}, area ignorability, is required for valid partial pooling.
  }
  \label{fig:sample-space}
\end{figure}

The key assumption for the partial-pooling approach is that there are no area-specific unobservables that are correlated with the outcome and  sampling, which we will call \emph{area ignorability} and formally state in
Assumption~\ref{assump:area-ignorability}.
The first part of the assumption states that the area indicator $A$ is independent of the outcome
conditional on two types of covariates $\*X^{P}$ and $\*X^{S}$, among those who are likely to be sampled into the survey ($S = 1$).
In other words, we assume that area-specific unobservables do not directly affect the individual outcome after controlling for individual covariates.
The second part of the assumption is overlap.
The overlap assumption says that we can find individuals with similar observed characteristics in all areas used for the analysis.
The assumption will be violated
if, for example, districts are perfectly divided by partisanship.
\begin{assumption}[Area ignorability]\label{assump:area-ignorability}
  The following two conditions hold:
  \begin{itemize}
    \item[(a)] Conditional independence:
      $Y \indep A \mid \*X^{P}, \*X^{S}, S = 1$.
    \item[(b)] Overlap: $0 < \Pr(A = j \mid \*X^{P}, \*X^{S}, S = 1) < 1$
    for all $(\*X^{P}, \*X^{S}) \in \mathcal{X}^{P} \times \mathcal{X}^{S}$.
  \end{itemize}
\end{assumption}
A similar assumption can be given for the random effects approach described earlier in Equation \ref{eq:mrp-area-re}. The main difference is that the random effects replaces $\*X^{S}$, an individual level variable, with $\overline{\*X}^{A}$, an aggregate variable. The random effects assumption is a stronger one, because it assumes the same area ignorability with less granular covariates.

To see how the two assumptions relate to each other, it is helpful to refer to the arrows in Figure \ref{fig:sample-space}. The population in the target area is represented by the top left quadrant of the figure (shaded in light blue)
and we are interested in estimating the mean outcome in this area.
Assumption~\ref{assump:sampling-ignorability} (sampling ignorability) allows researchers to use
a sample from the same area (corresponding to the lower left quadrant; shaded in dark blue).
Assumption~\ref{assump:area-ignorability} (area ignorability) allows researchers to
utilize a sample from the non-target area (the lower right quadrant),
in a way that is comparable to the sample from the target area.

Importantly, in contrast to Assumption~\ref{assump:sampling-ignorability} (sampling ignorability), the statement in Assumption~\ref{assump:area-ignorability} is conditional on the sampling indicator, $S = 1$ and is therefore an assumption about the subpopulation that we could observe, instead of the population as a whole.
This allows us to leverage additional covariates that are measured only in the survey (i.e., $\*X^{S}_{i}$) but are not measured at the population level.
In our validation study, we utilize variables that are predictive of turnout but cannot be used in standard post-stratification weighting, such as interest in the news or marital status.

Therefore, by assuming that samples are comparable across areas given a rich set of covariates,
Assumption~\ref{assump:area-ignorability} enables researchers to estimate the quantity of interest using observations from other areas (i.e., not from the target area).
More specifically, Assumption~\ref{assump:area-ignorability} makes samples from non-target areas comparable to target area samples after conditioning on variables $\*X^{P}_{i}$
and $\*X^{S}_{i}$.
Recall that Assumption~\ref{assump:sampling-ignorability} makes the target-area sample comparable to the \textit{population} for the target area.
Thus, by combining Assumption~\ref{assump:sampling-ignorability} and \ref{assump:area-ignorability},
we can make the sample from non-target area ($A_{i} \neq j$ and $S_{i} = 1$)
comparable to the \textit{population} in the target area.

Although we can increase the validity of the assumption by conditioning on additional variables,
the conditional independence assumption is still a strong assumption.
In Section~\ref{sec:assess-independence}, we propose a procedure to assess the assumption using the observed data.

\subsection{The Synthetic Area Estimator}\label{subsec:synthArea}

Based on the two assumptions, we now provide a new identification formula for the target quantity using the partial pooling strategy that combines the direct and indirect estimator (Proposition~\ref{prop:synthArea-identification}).
\begin{proposition}[Identification of the Small Area Estimate by Pooling]\label{prop:synthArea-identification}
Under Assumption~\ref{assump:sampling-ignorability} and \ref{assump:area-ignorability}, the estimand $\tau_{j}$ is nonparametrically identified as
\begin{align*}
\tau_{j} = \E\bigg\{
\frac{\bm{1}\{S = 1\}}{\Pr(A = j)}
\frac{\Pr(A = j\mid  \*X^{P}, \*X^{S}, S = 1)}{\Pr(A = j \mid \*X^{P}, S = 1)}
\frac{\Pr(A = j \mid \*X^{P})}{\Pr(S = 1 \mid \*X^{P})}
 Y
\bigg\}
\end{align*}
for each area $j = 1, \ldots, J$.
\end{proposition}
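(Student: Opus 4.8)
The plan is to derive the formula in two stages that mirror the two assumptions, and then to convert the resulting nested conditional expectation into the stated inverse-probability-weighting form by a change of measure. Throughout I would work with conditional expectations and the law of iterated expectations, taking care to track which conditional distributions are with respect to the full population and which are with respect to the sampled subpopulation ($S=1$), since the two never coincide.

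First I would start from the estimand $\tau_{j} = \E[Y \mid A = j]$ and condition on $\*X^{P}$ to write $\tau_{j} = \E_{\*X^{P} \mid A=j}\big[\E[Y \mid A=j, \*X^{P}]\big]$. Assumption~\ref{assump:sampling-ignorability}(a) gives $Y \indep S \mid \*X^{P}, A$, so $\E[Y \mid A=j, \*X^{P}] = \E[Y \mid A=j, \*X^{P}, S=1]$, replacing a population conditional mean by its sample analogue. I would then condition this inner quantity on $\*X^{S}$ within the sample and invoke Assumption~\ref{assump:area-ignorability}(a), $Y \indep A \mid \*X^{P}, \*X^{S}, S=1$, which removes the dependence on $A=j$, giving $\E[Y \mid A=j, \*X^{P}, \*X^{S}, S=1] = \E[Y \mid \*X^{P}, \*X^{S}, S=1]$, a pooled regression I abbreviate $g(\*X^{P}, \*X^{S})$. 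Combining the two steps yields the nested representation
\[
\tau_{j} = \E_{\*X^{P} \mid A=j}\,\E_{\*X^{S} \mid A=j, \*X^{P}, S=1}\big[g(\*X^{P}, \*X^{S})\big],
\]
in which the pooled $g$ (built from sampled units in every area) is integrated against the target density $f^{\star}(\*X^{P}, \*X^{S}) = P(\*X^{P} \mid A=j)\,P(\*X^{S} \mid A=j, \*X^{P}, S=1)$.

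The remaining step is a change of measure. Conditioning $\E\big[\bm{1}\{S=1\}\,\omega(\*X^{P}, \*X^{S})\,Y\big]$ on $(\*X^{P}, \*X^{S})$ turns it into the integral of $\omega\cdot g$ against the sampling density $\Pr(S=1)\,P(\*X^{P}, \*X^{S} \mid S=1)$; matching this to the display above forces the importance weight $\omega = f^{\star} / \big(\Pr(S=1)\,P(\*X^{P}, \*X^{S} \mid S=1)\big)$. I would then expand each factor by Bayes' rule — rewriting $P(\*X^{P} \mid A=j)$, $P(\*X^{S} \mid A=j, \*X^{P}, S=1)$, and $P(\*X^{P}, \*X^{S} \mid S=1)$ through the conditional probabilities of $A=j$ and $S=1$ — and cancel the common factors $P(\*X^{P})$ and $P(\*X^{S} \mid \*X^{P}, S=1)$. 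What survives is exactly
\[
\omega = \frac{1}{\Pr(A=j)}\,\frac{\Pr(A=j \mid \*X^{P}, \*X^{S}, S=1)}{\Pr(A=j \mid \*X^{P}, S=1)}\,\frac{\Pr(A=j \mid \*X^{P})}{\Pr(S=1 \mid \*X^{P})},
\]
which is the weight in the statement; the overlap conditions in Assumptions~\ref{assump:sampling-ignorability}(b) and~\ref{assump:area-ignorability}(b) guarantee that every denominator is strictly positive, so the conditional distributions and the weight are well defined and the change of measure is valid.

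The main obstacle I anticipate is bookkeeping rather than any deep step: the target density mixes a population conditional ($\*X^{P} \mid A=j$) with a sample conditional ($\*X^{S} \mid A=j, \*X^{P}, S=1$), so I must keep the population and $S=1$ measures strictly separate and apply Bayes' rule in the correct conditioning universe at each substitution. A tempting shortcut — verifying the identity by reducing the right-hand side directly — is messier here, because Assumption~\ref{assump:sampling-ignorability} conditions on $A$ rather than on $\*X^{S}$, so introducing the area indicator into $\E[\bm{1}\{S=1\}Y \mid \*X^{P}, \*X^{S}]$ requires an extra summation over $A$ that the forward derivation avoids. I would therefore present the forward two-stage argument and relegate the Bayes-rule cancellations to a short display.
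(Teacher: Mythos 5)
Your proof is correct, and while it opens exactly as the paper's does, it closes by a genuinely different route. Both arguments begin with the same two-stage reduction: the law of iterated expectations plus sampling ignorability replaces $\E[Y \mid A=j, \*X^{P}]$ with its $S=1$ analogue, and a second application plus area ignorability replaces the inner regression with the pooled $\E[Y \mid \*X^{P}, \*X^{S}, S=1]$, yielding the same nested representation. Where you diverge is in converting that representation to the weighting form. The paper repeatedly applies its Lemma~\ref{lemma:useful} to introduce indicator functions, then splits $\E(Y \mid \*X^{P}, \*X^{S}, S=1)$ into its $A = j$ and $A \neq j$ pieces, converts each to inverse-probability form, and recombines the indicators at the end; that detour is what makes Corollary~\ref{cor:alternative-identification} (the direct/indirect decomposition) fall out as an intermediate display. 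You instead read off the importance weight directly as the ratio of the target density $P(\*X^{P} \mid A=j)\,P(\*X^{S} \mid A=j, \*X^{P}, S=1)$ to the sampling density $\Pr(S=1)\,P(\*X^{P}, \*X^{S} \mid S=1)$ and verify by Bayes' rule that the ratio collapses to the stated weight. Your change-of-measure argument is shorter and makes the structure of the weight more transparent --- it is literally a density ratio between the target-area population and the pooled sample --- and your observation about why the reverse verification is awkward (Assumption~\ref{assump:sampling-ignorability} conditions on $A$, not $\*X^{S}$) is apt. The trade-off is that your route does not produce the partial-pooling decomposition of Corollary~\ref{cor:alternative-identification} as a byproduct, so that result would need a separate (short) derivation.
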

\begin{proof}
See Appendix~\ref{sec:proofs}.
\end{proof}

Proposition~\ref{prop:synthArea-identification} shows that the population average
for the target area, $\tau_{j}$, is identified using samples from all areas.
We can see this by realizing that the identification formula does not involve the indicator function about the area, $\bm{1}\{A = j\}$, which is present in the direct estimator.

%\subsection{Alternative Representations}

Although the above identification formulation is convenient for developing an estimator, we can obtain better intuition about the proposed method
by considering the following alternative formulation.

\begin{corollary}[Alternative Identification Formula]
\label{cor:alternative-identification}
Let $\pi_{j}(\*X) = \Pr(S = 1 \mid \*X^{P}, A = j)$ and $p_{j}(\*X) = \Pr(A = j \mid \*X^{P}, \*X^{S}, S = 1)$.
Then, the target estimand is also identified as
\begin{align*}
\tau_{j} &=
\underbrace{\E\bigg\{
\frac{\bm{1}\{S = 1, A = j \}}{\Pr(A = j)}
\frac{Y}{\pi_{j}(\*X)}\
p_{j}(\*X)
\bigg\}
}_{\text{Direct}}\\
&\quad +
\underbrace{\E\bigg\{
\frac{\bm{1}\{S = 1, A \neq j \}}{\Pr(A = j)}
\frac{\Pr(A = j \mid \*X^{P}, \*X^{S}, S = 1)}{\Pr(A \neq j \mid \*X^{P}, \*X^{S}, S = 1)}
\frac{Y}{\pi_{j}(\*X)}\
(1 - p_{j}(\*X))
\bigg\}
}_{\text{Indirect (Partially Pooled)}}.
\end{align*}
\end{corollary}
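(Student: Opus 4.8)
The plan is to derive the Corollary directly from Proposition~\ref{prop:synthArea-identification} by algebraically collapsing the triple-product weight and then partitioning the sample according to whether a unit lies in the target area. I would start from the single expectation in the Proposition, whose weight is
\[
W(\*X) = \frac{\Pr(A=j\mid \*X^P, \*X^S, S=1)}{\Pr(A=j\mid \*X^P, S=1)}\cdot\frac{\Pr(A=j\mid \*X^P)}{\Pr(S=1\mid \*X^P)},
\]
and first show that $W(\*X)$ reduces to the simple ratio $p_{j}(\*X)/\pi_{j}(\*X)$.

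The key simplification is a single application of Bayes' rule to the middle factor. Since $\Pr(A=j\mid \*X^P, S=1) = \pi_{j}(\*X)\,\Pr(A=j\mid \*X^P)/\Pr(S=1\mid \*X^P)$ — valid because $\pi_{j}(\*X)=\Pr(S=1\mid \*X^P, A=j)$ is a function of $\*X^P$ alone — we obtain $\Pr(A=j\mid \*X^P)/\Pr(S=1\mid \*X^P) = \Pr(A=j\mid \*X^P, S=1)/\pi_{j}(\*X)$. Substituting this into $W(\*X)$ cancels the $\Pr(A=j\mid \*X^P, S=1)$ sitting in the denominator of the first factor, leaving $W(\*X) = p_{j}(\*X)/\pi_{j}(\*X)$. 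Crucially, this is an identity of functions of $(\*X^P,\*X^S)$ and does not reference the realized value of $A$, so it applies uniformly to every sampled unit and the Proposition reduces to $\tau_{j} = \E\{\bm{1}\{S=1\}\,p_{j}(\*X)\,Y / (\Pr(A=j)\,\pi_{j}(\*X))\}$.

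With the weight in this reduced form, I would split the indicator via $\bm{1}\{S=1\} = \bm{1}\{S=1, A=j\} + \bm{1}\{S=1, A\neq j\}$ and distribute the expectation by linearity. The first piece is immediately the ``Direct'' term as stated. For the ``Indirect'' term, the observation is that in the written form the factor $\Pr(A=j\mid \*X^P,\*X^S,S=1)/\Pr(A\neq j\mid\*X^P,\*X^S,S=1)$ equals $p_{j}(\*X)/(1-p_{j}(\*X))$, so multiplying by the trailing $(1-p_{j}(\*X))$ telescopes back to $p_{j}(\*X)$; this recovers exactly the $A\neq j$ piece of the split, completing the identity.

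The argument is essentially bookkeeping, so the only genuine care lies in the regularity conditions: I would invoke Assumption~\ref{assump:sampling-ignorability}(b) to guarantee $\pi_{j}(\*X)>0$ and $\Pr(S=1\mid\*X^P)>0$ (the latter by marginalizing over $A$), and Assumption~\ref{assump:area-ignorability}(b) to guarantee $0<p_{j}(\*X)<1$, so that every denominator — including the $1-p_{j}(\*X)$ introduced in the Indirect term — is strictly positive and the manipulations are well defined. The main (and fairly mild) obstacle is simply tracking which probabilities condition on $S=1$ versus on the full population during the Bayes step; once the middle factor is rewritten, the remainder is a one-line cancellation followed by the indicator split.
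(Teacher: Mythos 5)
Your proof is correct, and it takes a mildly but genuinely different route from the paper's. The paper does not start from the statement of Proposition~\ref{prop:synthArea-identification}; it re-enters that proposition's \emph{proof} at the intermediate two-term display, applies the factorization $\Pr(S=1, A=j \mid \*X^{P}) = \Pr(S=1\mid \*X^{P}, A=j)\Pr(A=j\mid \*X^{P})$ to the $A=j$ term, and multiplies and divides the $A\neq j$ term by $\Pr(A\neq j\mid \*X^{P},\*X^{S},S=1)$ to manufacture the odds-ratio factor. You instead work entirely from the proposition's final formula: your Bayes step
\[
\frac{\Pr(A=j\mid \*X^{P})}{\Pr(S=1\mid \*X^{P})} \;=\; \frac{\Pr(A=j\mid \*X^{P}, S=1)}{\pi_{j}(\*X)}
\]
collapses the three-factor weight to $p_{j}(\*X)/\pi_{j}(\*X)$, after which the indicator split $\bm{1}\{S=1\}=\bm{1}\{S=1,A=j\}+\bm{1}\{S=1,A\neq j\}$ and the telescoping of $\frac{p_{j}(\*X)}{1-p_{j}(\*X)}\cdot(1-p_{j}(\*X))$ give both terms. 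The two arguments use the same underlying factorization of $\Pr(S=1,A=j\mid\*X^{P})$, but yours is self-contained given only the statement of the proposition (a reader need not open its proof), and it surfaces the clean identity that the synthetic area weight equals $p_{j}(\*X)/\pi_{j}(\*X)$ --- a fact the paper never states explicitly, and which is illuminating even though the three-factor form remains the operationally useful one (each factor is estimated from a different data source). Your attention to the overlap conditions guaranteeing strictly positive denominators, including the $1-p_{j}(\*X)$ introduced only in the indirect term, is also a point the paper's proof passes over silently.
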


The above corollary is informative for understanding how Assumption~\ref{assump:sampling-ignorability} and Assumption~\ref{assump:area-ignorability} plays a role for identification.
The first term on the right-hand side is almost identical (except for the $p_{j}(\*X)$ term) to the formula for the direct estimator in Equation~\eqref{eq:identification-direct}.
Since this term only involves observations from the target area, (notice the indicator variable $\bm{1}\{A = j\}$), this term does not require the second assumption.
The second term is more involved than the first term in that
it has an additional term $\Pr(A = j \mid \*X^{P}, \*X^{S}, S = 1) / \Pr(A \neq j \mid \*X^{P}, \*X^{S}, S = 1)$.
This term is motivated by the second assumption (Assumption~\ref{assump:area-ignorability}) and corrects the differences between two areas (i.e., the target area and the other areas) in terms of observed variables, $\*X^{P}$ and $\*X^{S}$.
We weight observations in the other areas such that the distribution will be identical to the samples in the target area, as indicated by an arrow in Figure~\ref{fig:sample-space} that maps observations in $\{S = 1, A = 0\}$ to the space $\{S = 1, A = 1\}$.

\subsection{Estimation and Implementation}\label{subsec:syntharea_steps_method}

The sample analog of the theoretical result in Proposition~\ref{prop:synthArea-identification} is our proposed estimator for partial pooling by weighting, which we call the \emph{synthetic area} estimator:
\begin{equation*}
\widehat{\tau}^{\texttt{SA}}_{j} = \sum^{n}_{i=1}\widehat{w}^{\texttt{SA}}_{ij}Y_{i}
\end{equation*}
where the summation is over the survey sample and the crucial weight is given by
\begin{equation} \label{eq:sa-weight}
\widehat{w}^{\texttt{SA}}_{ij} \propto
\underbrace{\frac{\widehat{\Pr}(A_{i} = j\mid  \*X^{P}_{i}, \*X^{S}_{i}, S_{i} = 1)}{\widehat{\Pr}(A_{i} = j \mid \*X^{P}_{i}, S_{i} = 1)}}_{\widehat\zeta_{ij}}
\underbrace{ %
\vphantom{ \left(\frac{\widehat{\Pr}()}{\widehat{\Pr}()}\right) }
\Pr(A_{i} = j \mid \*X^{P}_{i})}_{p_{ij}} \frac{1}{\widehat{\Pr}(S_{i} = 1 \mid \*X^{P}_{i})}.
\end{equation}
As is clear from the above expression,
the proposed estimator places non-zero weight on all observations,
whereas the direct estimator (Equation~\eqref{est:direct}) puts weights of zero on observations outside of the target area.

The weights (Equation \ref{eq:sa-weight}) requires the estimation of three quantities to construct weights.
First, the sampling probability $\Pr(S_{i} = 1 \mid \*X_{i})$ must be estimated from available data.
However, this is an easier task than sampling probability for the direct estimator because we do not condition on the specific area.
Therefore, it has to be estimated once, and does not require the area specific data.

In fact, if the overall poll comes with a set of weights the pollster has already computed, our method can directly incorporate it instead of estimating $\Pr(S_{i} = 1 \mid \*X_{i})$ altogether.
Surveys often come supplied with post-stratification weights estimated by the pollster through incorporating and balancing numerous relevant covariates. For example, recent YouGov's weights in the Cooperative Congressional Election Study are post-stratified on variables including news interest, religion, and state election results \citep{CCES2018}.
Denoting these weights as $w^{\texttt{national}}_{i} = 1 / \Pr(S_{i} = 1 \mid \*X_{i})$, one can simply compute the synthetic area weights for area $j$ as
\begin{equation} \label{eq:sa-weight-national}
\widehat{w}^{\texttt{SA}}_{ij} \propto
\widehat\zeta_{ij}\cdot p_{ij}\cdot w^{\texttt{national}}_{i} .
\end{equation}

Second, we estimate the ratio $\zeta_{ij}$ of the probability $\Pr(A_{i} = j \mid \*X^{P}_{i}, \*X^{S}_{i})$ over $\Pr(A_{i} = j \mid \*X^{P}_{i})$
with survey samples. This quantity can be interpreted as a kind of similarity score assigned to outside observations, where observations similar to the area of interest are upweighted. The probabilities can be estimated by the logistic regression or the multinomial logit. To avoid separation issues with highly predictive covariates, in practice we implement a regression which induces shrinkage on the coefficients \citep[][\textsf{bayesglm} in \textsf{R}]{gelman2008weakly}.
Note that this step is done only with the survey data, and does not require the population data.

Finally, we denote the proportion of observations within each area given covariates, $\Pr(A_{i} = j \mid \*X^{P}_{i})$ as $p_{ij}.$
% This quantity given straightforwardly by the population data.
This quantity is computed with population data
and does not require the survey samples.
When covariates are discrete, this probability can be estimated nonparametrically.

We implement these steps in an open-source R package, \textsf{synthArea}. To summarize,
\begin{description}
  \item[Step 1] Estimate the sampling weights $\widehat{\pi}_{i} = \widehat{\Pr}(S_{i} = 1 \mid \*X^{P}_{i})$ and compute the population of proportion of the area $p_{ij} = \Pr(A_{i} = j \mid \*X^{P}_{i})$ for all $j$.
    Note that $\widehat{\pi}_{i}$ can be replaced by $\widehat{w}^{\texttt{national}}_{i}$.

  \item[Step 2] For each area $j$, estimate the ratio of the two probabilities
    using the survey data,
    \begin{equation*}
    \widehat{\zeta}_{ij} =
    \frac{\widehat{\Pr}(A_{i} = j \mid \*X^{P}_{i}, \*X^{S}_{i}, S_{i} = 1)}
         {\widehat{\Pr}(A_{i} = j \mid \*X^{P}_{i}, S_{i} = 1)}
    \end{equation*}
    where the numerator and the denominator are estimated via the multinomial logit or logistic regression.
  \item[Step 3] Estimate the target quantities by the synthetic area estimator by taking the weighted sum of all observations,
  \begin{equation*}
  \widehat{\tau}^{\texttt{SA}}_{j} = \sum^{n}_{i=1}\widehat{w}^{\texttt{SA}}_{ij}Y_{i},\quad
  \text{where}\quad
  \widehat{w}^{\texttt{SA}}_{ij} =
  \frac{\widehat{\zeta}_{ij}p_{ij} / \widehat{\pi}_{i}}{\sum^{n}_{i'=1}\widehat{\zeta}_{i'j}p_{i'j} / \widehat{\pi}_{i'}}.
  \end{equation*}

\end{description}

\subsection{Assessing the Conditional Independence Assumption}\label{sec:assess-independence}

The validity of a partial pooling approach including the proposed method hinges on whether the area ignorability (Assumption~\ref{assump:area-ignorability}) is plausible or not.
Although the assumption is imposed at the population level,  and so it is not directly testable, we can derive a testable condition for Assumption~\ref{assump:area-ignorability} (a).
Note that this is different from the MRP approach where the validity of partial pooling is usually
not directly testable as it is incorporated as a part of its model specification.

One implication of the conditional independence assumption is that the outcome is mean independent of $A_{i}$
if we condition on both types of covariates, $\*X^{P}_{i}$ and $\*X^{S}_{i}$,
\begin{equation*}
\E[Y_{i} \mid A_{i}, \*X^{P}_{i}, \*X^{S}_{i}, S_{i} = 1]
=
\E[Y_{i} \mid \*X^{P}_{i}, \*X^{S}_{i}, S_{i} = 1]
\end{equation*}
We can test if the above condition holds or not by assuming a model for the conditional expectation.
Suppose that we wish to assess if the assumption holds for the $j$th area.
Let $\widetilde{A}_{ij} = \bm{1}\{A_{i} = j\}$ denote an indicator that takes 1 if observation is from area $j$ and takes zero otherwise.
Then, assuming a linear model, we test if
the regression coefficient in front of $\widetilde{A}_{ij}$ is statistically distinguishable from zero.
\begin{equation*}
\E[Y_{i} \mid \widetilde{A}_{ij}, \*X^{P}_{i}, \*X^{S}_{i}, S_{i} = 1]
= \alpha + \delta_{j} \widetilde{A}_{ij} + \bm{\beta}^{\top}\*X^{P}_{i} + \bm{\gamma}^{\top}\*X^{S}_{i}
\end{equation*}
We recommend that scholars ideally use the equivalence test \citep{hartman2018equivalence,wellek2010testing} for testing if $\delta_{j} = 0$, as the failure to reject the conventional null $\delta_{j} = 0$ is not sufficient evidence for $\delta_{j} = 0$. Instead, the equivalence test posits the null of $|\delta_{j}| \geq \epsilon$ (i.e., the violation of the assumption by $\epsilon > 0$);
so the rejection of the null can be taken as
evidence for the validity of the assumption.

\section{Empirical Validation Study}\label{sec:validation}

%!TEX root = synthArea.tex

For our main test of the method
we compare how well our weighting method recovers small area quantities in practice.
We demonstrate that the proposed method does as well as the standard MRP.

\subsection{Setup}

We use the Cooperative Congressional Election Study (CCES), attempting to estimate quantities at the congressional district (CD) level that can be compared with a ground truth. Validation studies using real data are critical for assessing the quality of a small area estimation, but such tests are still rare \citep[for exceptions, see][]{Warshaw2012,Hanretty2016a,fraga2020measuring}.
Studying congressional district level estimates from the CCES is a useful exercise because scholars use the CCES to infer small area quantities, but direct estimates are understood to be noisy \citep{broockman2017bias, Kalla2019}.
While the CCES weighted state samples are designed to be representative of each state's adult population,
the average number of respondents at the CD level is $148$. In contrast, each CD contains a population of about 750,000.

Our main outcome for validation is 2018 turnout as a proportion of the  Voting Age Population (VAP) in each congressional district. Turnout can be validated because district level turnout is observed with little measurement error. {In this application, we take the fraction of the turnout electorate provided by TargetSmart, divided by the estimated population 18 years and older in the ACS.} Turnout also has a desirable feature for validation in that a survey respondent's turnout in the CCES is measured with little misreporting. We use the validated vote of the CCES, which is an indicator of whether the survey respondent's personally identifiable information matches with the official voter file \citep{Ansolabehere2012}.  This means that there is limited misreporting error for our indicator, and any discrepancies between our estimate and the population turnout is more likely attributable to selection bias  instead of unrelated factors such as misreporting.

\subsection{Data}

For illustration we use the Florida and Texas subset of the 2018 CCES Common Content, totaling 8,854 respondents in 63 districts. The CCES is designed to match the national Voting Age Population through a matching procedure to YouGov's sampling frame. YouGov also provides post-stratification weights to weight samples to \emph{national} and \emph{state-level} populations, correcting for age, gender, education, race, and voter registration status at the national level and state-level election results \citep{CCES2018}.

We estimate our proposed method (outlined in Section \ref{subsec:syntharea_steps_method}) with the following predictors. For our population variables $\*X^P$, we use \texttt{sex}, binned \texttt{age}, and \texttt{education}. The joint distribution of these variables are taken from the ACS Table \texttt{B15001}.  While we are limited by what the ACS provides for $\*X^P$ (they do not release tables for more than three-way interactions at the CD-level), we can choose the survey variables $\*X^{S}$ freely. To meet Assumption \ref{assump:area-ignorability} we must include variables that we expect to be predictive of turnout and can explain the differences in turnout by CD. We use the three-way interaction of \texttt{race}, \texttt{news interest}, \texttt{partisan identification}.

\subsection{Estimation}

Our implementation follows the three components of our estimator (Proposition~\ref{prop:synthArea-identification}):
\begin{enumerate}
\item For $\pi_i = {\Pr}(S_{i} = 1 \mid \*X^{P}_{i})$,
% we test two versions: First, we  estimate these from the survey and ACS. We do this
% nonparametrically given that covariates are all discrete: The three variables $\*X^P$ we use to form a partition of 70 cells, so the proportion of observations sampled in the survey for each cell becomes the estimate of the sampling probability.
% Note again that this is different from MRP, where post-stratification occurs at the \emph{district}-cell level. Another version we test
we use the existing post-stratification weights as-is. That is, we use  $\widehat{\pi}_i \propto 1 / w_i^{\texttt{yougov}}$ where $w_i^{\texttt{yougov}}$ is the post-stratification weight provided by YouGov. Because YouGov's weights adjust for not only $\*X^P$ but also other important variables such as registration and news interest, one would expect that this weight to be incorporate more information.

\item For $p_{ij} = \Pr(A_{i} = j \mid \*X^{P}_{i})$, we only use the ACS (population) counts and simply count the fraction of areas given each cell of population covariates.

\item We use the survey variables in estimating the ratio $\widehat{\zeta}_{ij} = \widehat{\Pr}(A_{i} = j \mid \*X^{P}_{i}, \*X^{S}_{i}, S_{i} = 1)  /\widehat{\Pr}(A_{i} = j \mid \*X^{P}_{i}, S_{i} = 1)$. We estimate the numerator and denominator from the in-sample fitted probabilities of separate logistic regressions with shrinkage.
\end{enumerate}

We compare our estimates to a standard MRP model to examine the implications of each modeling approach. When modeling MRP, use the same survey data and post-stratification targets as much as possible. We include the same variables $\*X^P$ we include in our proposed synthetic estimator (\texttt{sex}, binned \texttt{age}, and \texttt{education}), and partially pool by modeling these as random effects. We also model random intercepts (Equation \ref{eq:mrp-area-re}) for each congressional district. We estimate the models with a Bayesian model with standard Normal priors, using the \textsf{brms} package.

\subsection{Assessing Area Ignorability}

Before comparing estimates, we first test whether the area-ignorability assumption is reasonable in turnout. For a given target $j$, We estimate the linear model regressing the outcome $Y_i$ on the survey variables $\*X^P$ and  $\*X^{S}$, and an indicator variable $\bm{1}\{A_i = j\}$ and simply examine the coefficient on the last indicator variable. Rejecting the null hypothesis that its coefficient is zero indicates a violation of Assumption \ref{assump:area-ignorability}.

\begin{figure}[tb]
\centering

\centerline{
  \includegraphics[width=\textwidth]{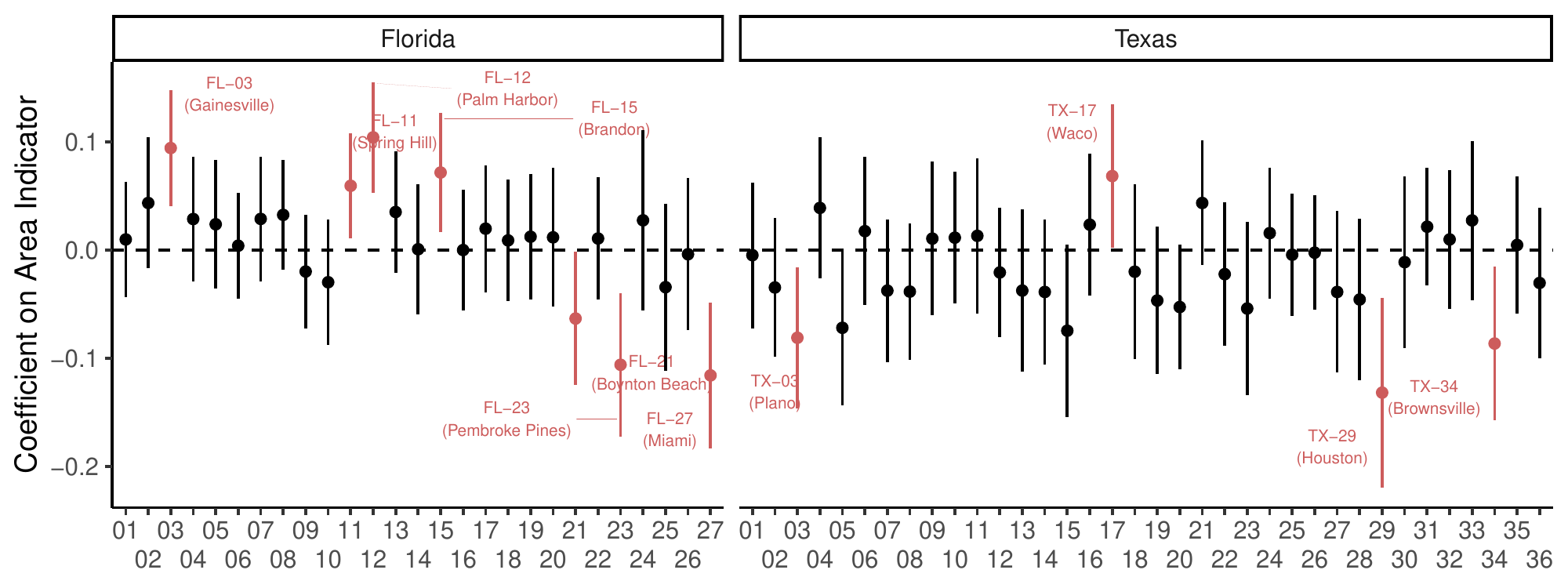}
}
\caption{\textbf{Test for Area Ignorability}.
  Each point shows the 95 percent confidence interval from the test statistic for a given congressional district. Districts whose confidence intervals do not cross zero are highlighted, with the name of the largest place in the district.
  }\label{fig:val_area-ignorability}
\end{figure}

Figure \ref{fig:val_area-ignorability} shows the range of the coefficients with 95 percent confidence intervals. Most district estimates are not distinguishable from zero, indicating that the condition for in-sample area ignorability is met.  Positive coefficients indicate that samples of the district of interest have higher turnout in the sample than observations matched on the covariates, and negative coefficients indicate the opposite.

Before turning to validation results, it is worth noting what this test can and cannot diagnose. Assumption \ref{assump:area-ignorability} is about the ignorability of areas in the sample. It does \emph{not} ask whether samples are representative of the population. Indeed, as we will see in the validation, districts that pass the area ignorability test as shown above may still overestimate turnout because Assumption \ref{assump:sampling-ignorability} validation is violated and the survey over-samples likely voters. That said, the test is still useful to respond to concerns that the small area estimation is \emph{over-pooling}.

\subsection{Comparison of Methods Estimating Turnout}

Our main results are in Figure \ref{fig:val_turnout-main}. Each scatter plot compares the true population turnout on the x-axis and the estimates on the y-axis.  We compute four sets of fit statistics and print them in the bottom-right of each figure: the root mean square error, the mean of the absolute value of the error, the mean of the (positive or negative) error, and the correlation. All values are taken across the 428 congressional districts we use.

\begin{figure}[tb]
\vspace{-1em}
\center

\includegraphics[width = \linewidth]{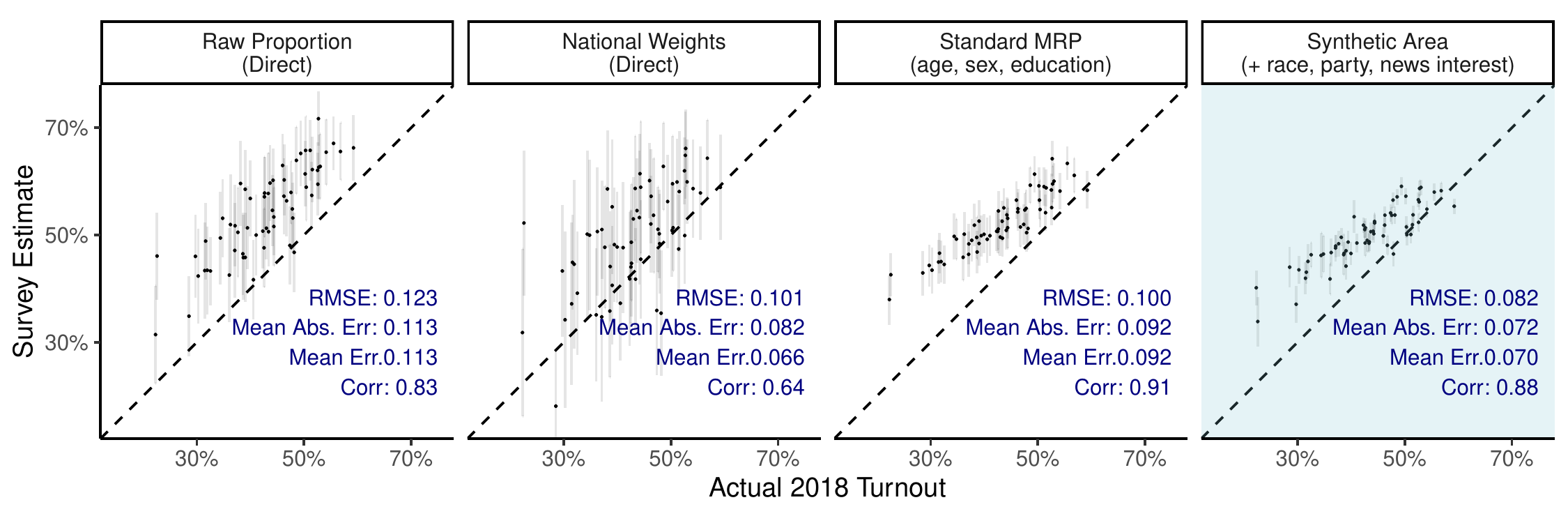}

% \begin{minipage}{\linewidth}
% \footnotesize
% \singlespacing
\caption{\textbf{Predictive Performance of MRP and the Synthetic Area Estimator}.
Each scatter plot shares the same ground truth values of the x-axis and shows different model estimates, with 90 percent confidence intervals (direct estimator and synthetic area) or credible intervals (MRP), on the y-axis. Our proposed small area estimator is highlighted in blue. }
\label{fig:val_turnout-main}
% \end{minipage}
\end{figure}

The quality of the estimates is measured by the four metrics presented at the corner of each panel.
The unweighted estimates (first panel) are correlated with actual turnout, but prediction error is on the order of 10 percentage points. The survey consistently overestimates turnout rates.  The post-stratification weights --- which are designed to make weighted samples of the entire CCES or state-level estimates representative --- do reduce prediction error, for example by down-weighting higher education voters who are more likely to turn out.

The MRP estimates (third panel) improve the prediction error compared to the direct estimates (first two panels). Partially pooling CD-level intercepts leads to a  MRP in this example appears to over-pool towards the global mean. MRP estimates have  smaller cross-district variation, and the estimates tend to align cleanly with the ground truth. Ideally, we would include more demographic variables to improve the outcome model, but this is limited by the data available in the CD-level population distributions. Synthetic \emph{population} methods \citep{leemann2017extending} provide some tools to do this which would amount to expanding the set of population variables $\*X^{P}$, benefitting both MRP and our estimator.

Finally, our synthetic area estimator that partially pools with more variables reduces the prediction error and is comparable with, if not more effective than, MRP estimates. Our estimates (highlighted in light blue), pulls estimates closer towards the ground truth and decreases the RMSE to 0.082, or 8.2 percentage points. The distribution of points is similar to MRP, but amounts to an improvement over MRP in the aggregate. Recall that our main differences between MRP is that we can adjust for survey-only variables such as news interest, race, and party identification. We also use YouGov's post-stratification weights as a pre-estimated propensity score for selection, thereby indirectly using other information that YouGov adjusted for.  In summary, we interpret these results as representing the value of adjusting for more relevant covariates while relaxing the level of the post-stratification.

Still, all the models in Figure \ref{fig:val_turnout-main} tend to over-estimate turnout. Indeed, this example is useful to highlight the distinction between Assumption \ref{assump:sampling-ignorability} and \ref{assump:area-ignorability} because the CCES is generally known to over-represent likely voters, even after standard post-stratification. The partial pooling component of small area estimation, whether it is MRP or our estimator, functions orthogonally from the question of selection. Selection is handled in the post-stratification step, which is roughly comparable across the specifications in this Figure.

To the extent that our method errs in predicting turnout, it errs in similar ways as MRP. Figure \ref{fig:corr_error} compares model estimates against each other, after subtracting out the true turnout to highlight model error. We compare the prediction error of our synthetic area estimator to the prediction error of the raw average and MRP. Our error is more strongly correlated with the prediction errors of MRP, at about 0.89. This is consistent with the idea that both models are essentially models of partially pooling across areas. We also see from the outliers in Figure \ref{fig:corr_error} that prediction errors are especially acute in heavily Hispanic CDs, consistent with studies that show consistent underestimation of minority turnout \citep{fraga2021CPS}.

\begin{figure}[tb]
\centering
\includegraphics[width=\linewidth]{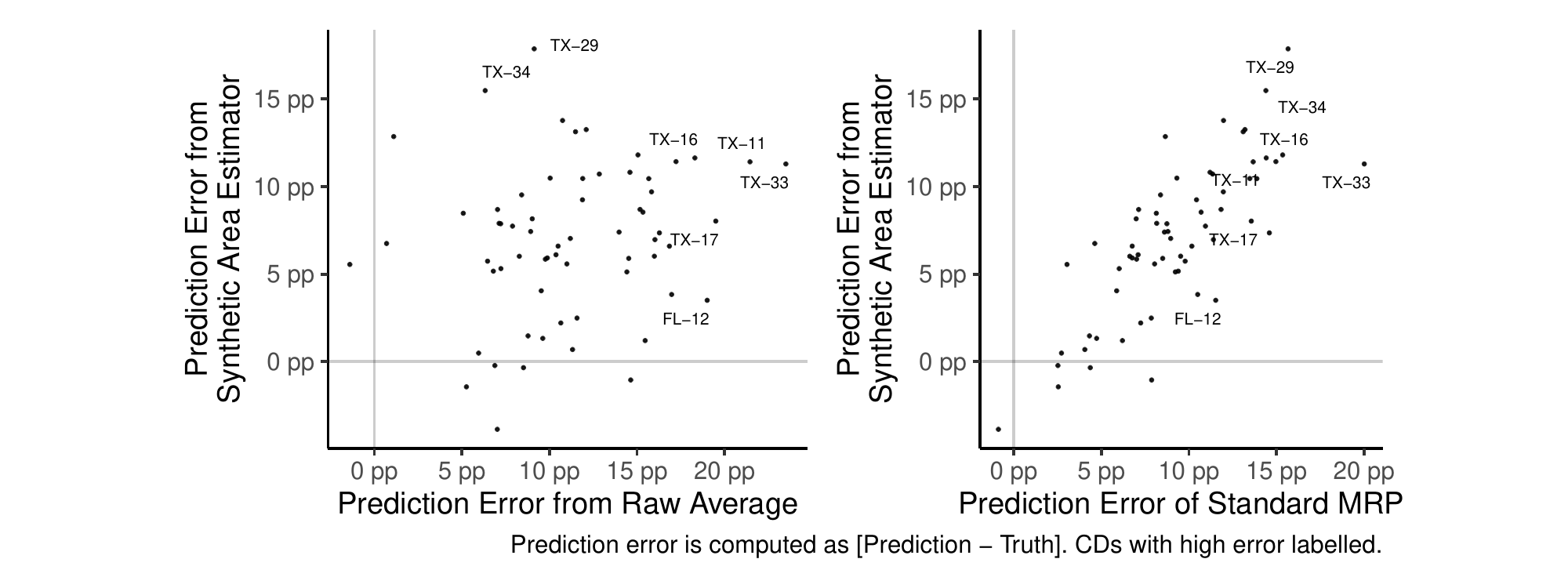}
\caption{
  \textbf{Top panel: Correlation Between Model Errors}.
  Predictions come from the same values in Figure~\ref{fig:val_turnout-main}.
}
\label{fig:corr_error}

\end{figure}

\section{Empirical Application}\label{sec:application}

%!TEX root = synthArea.tex

To illustrate the assumptions underlying the synthetic area approach and its potential uses, we turn again to the CCES 2018 data. Instead of turnout, we estimate the support for immigration policies at the congressional district level, which can then be used as predictors and outcomes for studying how U.S. House members represent their district opinion.
In this application, we demonstrate the importance of survey variables to obtain reliable estimates in small areas.
Florida is a large and diverse state, with six out of its 27 congressional districts being majority Black or majority Hispanic. Hispanic groups in Florida are also a mix of those with Mexican, Cuban, Puerto Rican, and Venezuelan ancestry. We would expect there to be rich district-level heterogeneity in demographics and therefore preferences.

\begin{figure}[!tb]
  \centerline{
    \includegraphics[width=\linewidth]{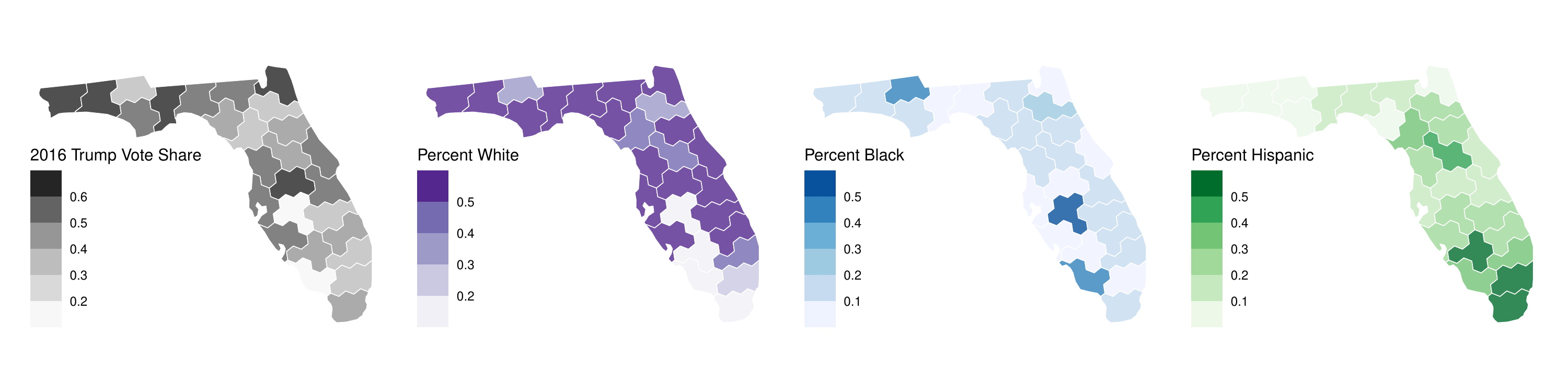}
  }
  \caption{\textbf{Congressional Districts in Florida}. Darker color shows  higher rate of support for Trump in the 2016 election (the first column), and higher proportion of each racial group.}
  \label{fig:fl_demographics}

\includegraphics[width = \linewidth]{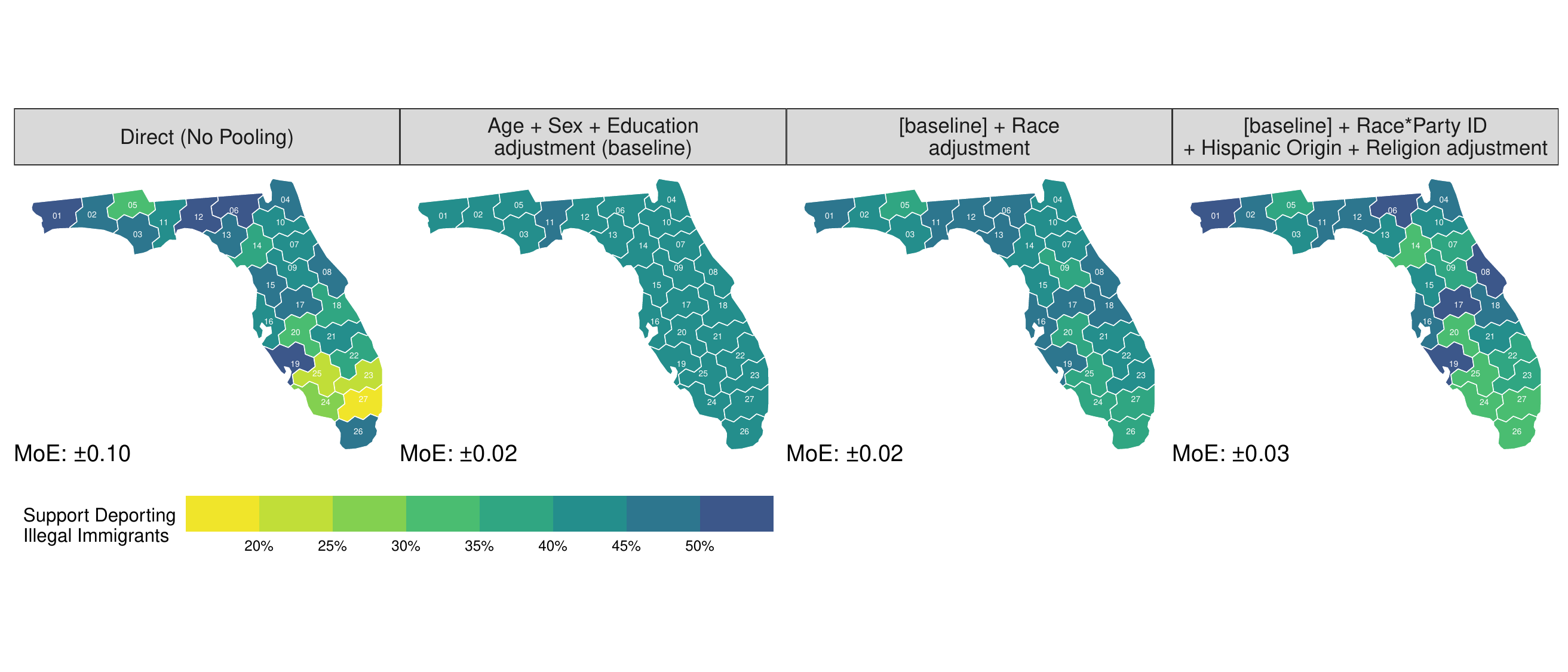}
\caption{\textbf{Estimated support for deportation at the CD level}. The Direct Estimate panel shows simple averages. The remaining three panels show estimates from the synthetic estimator, in increasing levels of covariate adjustment. The simplest version with no survey-only variables appears to over-homogenize opinion, while adjusting for differences in race, Hispanic origin, party identification, and religion distinguishes districts in reasonable ways. The average margin of error (twice the standard error) for a given district estimate is consistently lower for synthetic area estimates due to partial pooling.}
\label{fig:fl_deportation_res}
\end{figure}

Figure~\ref{fig:fl_demographics} shows some background demographics of each congressional district, taken from the ACS. To give equal size to roughly equal sized congressional districts, we use the cartogram and area descriptions developed by Daily Kos \citep{donnermap}. This shows the variation in dense areas such as Miami  better than a normal map, but it does distort the place of locations within the state somewhat

We applied the proposed method to the CCES question that asks respondents about their attitude toward immigration policies.
Specifically, we use a question that asks if the US government should, in the language of the survey, ``identify and deport illegal immigrants'' (\texttt{CC16\_331\_7}).
We use a binary coding of this question, using as 1 if a respondent agrees with the statement, and 0 otherwise.
Our quantity of interest is therefore the proportion of adults who support the deportation policy at each congressional district.

We compare three different sets of covariate adjustments against the direct estimator that uses only observations from each district to estimate the quantity of interest.
As a baseline estimate, we adjust for basic demographic variables (\texttt{age}, \texttt{sex}, and the level of \texttt{education}).
As a second model, we adjust for \texttt{race} which is observed in the survey but is not observed at the population level.
In the last model, we adjust for a variety of individual level characteristics that are observed only in the survey, such as respondent's party identification,
 and responses to the other immigration questions in the battery. We also control for whether a Hispanic respondent identifies as being of Cuban, Puerto Rican, Mexican, or other South American descent, each as binary indicator variables.
We emphasize that these variables have not been utilized in the existing small area estimation methods, because none of them are observed at the population level.

Figure~\ref{fig:fl_deportation_res} shows the estimated percent of support for deporting illegal immigrants across four sets of models. The first panel starts with direct effects which exhibit high variation, but in somewhat unreliable or unexpected ways.
The two southern districts in the Miami area are both majority Hispanic, but they show noticeably different estimated levels of support for this group. In any case, the margin of area for each area estimate is 10 percentage points on average, making its use impractical.

The second panel uses the proposed method but only with three population variables (i.e., $\*X^{P}$) for poststratification without any survey-only variables.
In contrast, it shows that estimates are similar across the state. All but one CD has an estimate in the 40-45\% range, which may strike someone as too homogeneous. This example shows the pitfalls of adjusting for too few variables in partial pooling.

The third panel shows the estimate with one additional survey variable: \texttt{race}.
By accounting for the within-survey racial heterogeneity, we get more diversity in estimates compared to the second panel that does not account for the survey variables.

In the final panel, we add an even richer set of covariate adjustments and obtain an even more heterogeneous set of estimates that retain a reasonable standard error. We include interactions for race and party identifications, allowing, for example, for partisan divergences to differ between Whites and Hispanics.  We include binary variables for Hispanic origin, distinguishing, for example, between Cuban and Mexican Americans.

In this application, we do not have access to the ground truth.
Thus, unlike in the previous section, we cannot necessarily validate these estimates.
We can still assess if the estimates are reasonable by checking if districts with similar demographics exhibit similar estimates.
The Southwest tip of the Florida peninsula shows the Miami area, roughly Congressional Districts 20, 21, 22, 23, 25, and 27. We can see that in the most extensive model (the fourth column in the figure),
the estimates are consistent with our expectations that race should shape views for immigration policy:
The Hispanic districts FL-23 (Broward County - center) and FL-27 (Southern Miami area and Coral Gables) have similar levels of opposition to the deportation policy,
the Black districts FL-20 (West Palm Beach and Fort Lauderdale areas) and FL-24 (Northern Miami) have similar levels, and the majority White districts of FL-21 (Southeast Palm Beach County) and FL-22 (Boca Raton and Northeast Broward County) show similar levels too.

In sum, the synthetic area estimator produces reasonable estimates of
the support for deportation that correlate with the racial composition of the district,
but only after adjusting for a sufficient set of covariates. The importance of covariates should be expected from the two assumptions for small area estimation we have shown.

% Compare this with the MRP estimates that use the same set of variables, but is restricted because it can only access non-census variables through area-level averages. Each MRP model was constructed similar to the validation exercise. All models included random effects for the congressional district, and population variables age group, sex, and education levels. Other survey-only variables were given as continuous proportions: for example, for controlling for Republican party identification, we only included the proportion of Republican identifiers in each congressional district. While this may be uncommon in MRP implementations, it is essentially serves the common practice of controlling for voteshare in MRP outcome models. For both synthArea and MRP, these survey-only variables are not used for post-stratification but only for partial pooling.

\FloatBarrier

\section{Conclusion}

%!TEX root = synthArea.tex

Small area estimation is practical in spirit: it allows survey researchers to stabilize subnational estimates that would be prohibitively noisy on their own. This benefit, however, comes with the assumption required for borrowing information across areas. Somewhat surprisingly, the concept of ``borrowing information'', while widely referenced, has rarely been formally defined upfront as an identification assumption, even in extensive simulation validations of MRP \citep{buttice2013does}. We suspect one reason for this ambiguity is because outcome modeling approaches have been dominant in small area estimation methods.

In this paper we proposed a small area estimation method in a weighting approach. Unlike classic small area estimation \citep{fay1979estimates}, our \emph{synthetic area} estimator partially pools by combining the direct estimate with an indirect estimate that is weighted by a propensity score to ``look like'' the area of interest.  We highlighted several practical benefits of this choice.  Mainly, this allows researchers to incorporate individual-level predictors of area heterogeneity that are out of reach for random effects approaches, which can only incorporate area-level aggregates.

Our synthetic estimator approach aligns even more closely to MRP methods. The key innovation of MRP over classic small area estimation was to correct outputs of a random effects model for unrepresentativeness and selection bias \citep{park2004bayesian}. Our synthetic estimator achieves the same goal by combining the selection ignorability assumption (Assumption \ref{assump:sampling-ignorability}) with the area ignorability assumption (Assumption \ref{assump:area-ignorability}) in a single estimator.

How should applied researchers choose between these methods?  MRP, like classic small area estimation, limits itself to variables that are measured in the population or area-level variables.  Nevertheless MRP may be preferable if the researcher is focused on one outcome and is willing to build a massive model with deep interactions where principles of shrinkage \citep{stein1981estimation} embodied in random effects are likely to pay off. In contrast, our approach can be better suited when researchers and pollsters want to estimate a set of weights that can be used to any other outcome in the survey. Our method also better integrates the advances in causal inference for observational data in recent years \citep{hazlett2016kernel,wong2018kernel, hainmueller2012entropy} because each step of propensity score computation can be easily replaced with any of these new estimators. We also note that it is possible in future work to combine our estimator with the MRP to form a doubly robust estimator that allows for the misspecification of either the sampling model or outcome model \citep[e.g.,][]{tan2010bounded, robins1995analysis}. In fact, such estimators are increasingly popular to correct inference when models are estimated with machine learning methods \citep{chernozhukov2018double}.

Our theoretical proposition in this paper is relevant whether one uses our method or an outcome-based approach like MRP. We essentially suggest that researchers can ask two questions when they evaluate the validity of any small area estimate: First, are a sufficient number of relevant covariates included in the post-stratification portion such that selection bias is ignorable? And second, are a sufficient number of relevant covariates included in both the post-stratification and pooling components, such that area or subgroup boundaries are ignorable? 

In practice these conditions are rarely met. Selection bias that cannot be accounted for by standard post-stratification is a problem for all modern survey research, whether small area estimates or not.  And when it comes to area ignorability, scholars might have substantive reasons to be skeptical of claims that a synthetic California, for example, is exchangeable with survey responses actually from California. Our hypothesis test can detect whether such area exchangeability assumptions are plausible. But generally speaking, identification results do not come for free in a statistical model: they must be bought with data. One task for pollsters and researchers, then, is to collect and measure covariates that are sufficiently rich to minimize the double selection problem we have highlighted.

\clearpage
\appendix

\section{Proofs}\label{sec:proofs}

% \subsection{Lemmas}

\begin{lemma}[Useful Properties]\label{lemma:useful}
Let $X$, $Y$, and $Z$ be random variables.
Also let $D \in \{0, 1\}$ be a binary random variable.
Then, following equalities hold:
\begin{itemize}
  \item[(a)] $\E\big[\E(X \mid Z) Y \big] = \E\big[X \E(Y \mid Z) \big]$
  \item[(b)] $\E[ Y \mid X, D = 1] = \E[DY / \Pr(D = 1 \mid X)\mid X]$
\end{itemize}
\end{lemma}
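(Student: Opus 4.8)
The plan is to derive both identities from the law of iterated expectations together with the fact that a conditional expectation $\E(\cdot \mid Z)$ is a function of (hence measurable with respect to) the conditioning variable, so that such factors pass freely inside or outside a conditional expectation given that variable. Neither part requires anything beyond the existence of the relevant expectations; for part (b) I will additionally invoke the overlap-type condition $\Pr(D = 1 \mid X) > 0$ so that the division is well defined.

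For part (a), I would first apply the tower property to write $\E[\E(X \mid Z)\, Y] = \E\big[\E(\E(X \mid Z)\, Y \mid Z)\big]$. Since $\E(X \mid Z)$ is a function of $Z$, it factors out of the inner conditional expectation, yielding $\E\big[\E(X \mid Z)\, \E(Y \mid Z)\big]$. This expression is symmetric in the roles of $X$ and $Y$, so running the identical argument starting from the right-hand side $\E[X\, \E(Y \mid Z)]$ produces the same symmetric quantity. Equating the two then establishes the claim.

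For part (b), I would start from the right-hand side and pull the $X$-measurable factor $\Pr(D = 1 \mid X)^{-1}$ out of the conditional expectation given $X$, leaving $\Pr(D = 1 \mid X)^{-1}\,\E[DY \mid X]$. I would then evaluate $\E[DY \mid X]$ by conditioning further on $D$: because $D \in \{0,1\}$, we have $DY = Y$ on $\{D = 1\}$ and $DY = 0$ on $\{D = 0\}$, so $\E[DY \mid X] = \Pr(D = 1 \mid X)\,\E[Y \mid X, D = 1]$. Substituting this back cancels the $\Pr(D = 1 \mid X)$ factors and returns $\E[Y \mid X, D = 1]$, as required.

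The steps are elementary, so the only real care needed is bookkeeping: confirming that each pulled-out factor is genuinely measurable with respect to the conditioning sigma-algebra, and using positivity of $\Pr(D = 1 \mid X)$ so that part (b) never divides by zero on a set of positive measure. I do not anticipate a substantive obstacle beyond these measurability checks.
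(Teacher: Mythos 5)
Your proof of both parts is correct. Note that the paper itself states this lemma without any proof, treating both identities as standard facts to be invoked later, so there is no in-paper argument to compare yours against; your derivation---the tower property plus pulling the $Z$-measurable factor $\E(X \mid Z)$ inside the inner conditional expectation to reach the symmetric form $\E\big[\E(X\mid Z)\E(Y\mid Z)\big]$ for (a), and the decomposition $\E[DY \mid X] = \Pr(D = 1 \mid X)\,\E[Y \mid X, D = 1]$ over the events $\{D=1\}$ and $\{D=0\}$ for (b)---is exactly the standard argument that fills this gap. Your attention to the two side conditions (integrability of the relevant products in (a), and positivity of $\Pr(D=1\mid X)$ in (b), which the paper's Assumptions~\ref{assump:sampling-ignorability}(b) and \ref{assump:area-ignorability}(b) supply where the lemma is applied) is appropriate and complete.
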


% \subsection{Proofs of Propositions}

\subsection{Proof of Proposition~\ref{prop:synthArea-identification}}
Under Assumption~\ref{assump:sampling-ignorability} and \ref{assump:area-ignorability}, the estimand $\tau_{j}$ is nonparametrically identified as
\begin{align*}
\tau_{j} = \E\bigg\{
\frac{\bm{1}\{S = 1\}}{\Pr(A = j)}
\frac{\Pr(A = j\mid  \*X^{P}, \*X^{S}, S = 1)}{\Pr(A = j \mid \*X^{P}, S = 1)}
\frac{\Pr(A = j \mid \*X^{P})}{\Pr(S = 1 \mid \*X^{P})}
 Y
\bigg\}
\end{align*}
for  $j = 1, \ldots, J$.

\begin{proof}

\begin{align}
\E(Y \mid A = j)
&=
\E\Big \{
  \E(Y \mid \*X^{P}, S = 1, A = j)
  \mid A = j
\Big\} \nonumber \\
&=
\E\Big \{
  \E\big[ \E(Y \mid \*X^{P}, \*X^{S}, A = j, S = 1) \mid \*X^{P}, S = 1, A = j \big]
  \ \big|\ A = j
\Big\} \nonumber \\
&=
\E\bigg\{
\frac{\bm{1}\{A = j\}}{\Pr(A = j)}
\E\bigg[
  \frac{\bm{1}\{S = 1, A = j\}}{\Pr(S = 1, A = j \mid \*X^{P})}
  \E(Y \mid \*X^{P}, \*X^{S}, S = 1) \ \big|\ \*X^{P}
\bigg]
\bigg\}\nonumber \\
&=
\E\bigg\{
\frac{\Pr(A = j\mid \*X^{P})}{\Pr(A = j)}
\bigg[
  \frac{\bm{1}\{S = 1, A = j\}}{\Pr(S = 1, A = j \mid \*X^{P})}
  \E(Y \mid \*X^{P}, \*X^{S}, S = 1)
\bigg]
\bigg\}\label{eq:proof-prop1}
\end{align}
where the first equality is due to the law of iterated expectation
and Assumption~\ref{assump:sampling-ignorability}, the second equality is due to the law of iterated expectation and Assumption~\ref{assump:area-ignorability},
and the third and the fourth equality are the direct application of Lemma~\ref{lemma:useful}.

Now, we focus on the expectation term inside of the outer expectation,
\begin{align*}
\E& (Y \mid \*X^{P}, \*X^{S}, S = 1)\\
&= \E(Y \mid \*X^{P}, \*X^{S}, S = 1, A = j)\Pr(A = j\mid  \*X^{P}, \*X^{S}, S = 1)\\
&\qquad  +
\E(Y \mid \*X^{P}, \*X^{S}, S = 1, A \neq j)\Pr(A \neq j\mid  \*X^{P}, \*X^{S}, S = 1)\\
&= \E\bigg\{
\frac{\bm{1}\{S = 1, A = j \} }{\Pr(S = 1, A = j \mid \*X^{P}, \*X^{S})} Y\ \big|\ \*X^{P}, \*X^{S}
\bigg\} \Pr(A = j\mid  \*X^{P}, \*X^{S}, S = 1)\\
&\qquad +
\E\bigg\{
\frac{\bm{1}\{S = 1, A \neq j \} }{\Pr(S = 1, A \neq j \mid \*X^{P}, \*X^{S})} Y\ \big|\ \*X^{P}, \*X^{S}
\bigg\} \Pr(A \neq j\mid  \*X^{P}, \*X^{S}, S = 1)
\end{align*}
where the first equality is due to the law of iterated expectation
and the final line is again the direct application of Lemma~\ref{lemma:useful}.
Then, we can express the terms in the square brackets as
\begin{align*}
&
  \frac{\bm{1}\{S = 1, A = j\}}{\Pr(S = 1, A = j \mid \*X^{P})}
  \E(Y \mid \*X^{P}, \*X^{S}, S = 1)
\\
&=
\frac{\bm{1}\{S = 1, A = j\}}{\Pr(S = 1, A = j \mid \*X^{P})}
\E\bigg\{
\frac{\bm{1}\{S = 1, A = j \} }{\Pr(S = 1, A = j \mid \*X^{P}, \*X^{S})} Y\ \big|\ \*X^{P}, \*X^{S}
\bigg\} \Pr(A = j\mid  \*X^{P}, \*X^{S}, S = 1)
\\
&\quad +
\frac{\bm{1}\{S = 1, A = j\}}{\Pr(S = 1, A = j \mid \*X^{P})}
\E\bigg\{
\frac{\bm{1}\{S = 1, A \neq j \} }{\Pr(S = 1, A \neq j \mid \*X^{P}, \*X^{S})} Y\ \big|\ \*X^{P}, \*X^{S}
\bigg\} \Pr(A \neq j\mid  \*X^{P}, \*X^{S}, S = 1).
\end{align*}

Finally, we plug in the above expression to Equation~\eqref{eq:proof-prop1}
and obtain,
\begin{align*}
\E&\bigg\{
\frac{\Pr(A = j\mid \*X^{P})}{\Pr(A = j)}
\bigg[
  \frac{\bm{1}\{S = 1, A = j\}}{\Pr(S = 1, A = j \mid \*X^{P})}
  \E(Y \mid \*X^{P}, \*X^{S}, S = 1)
\bigg]
\bigg\}\\
&=
\E\bigg\{
\frac{\Pr(A = j\mid \*X^{P})}{\Pr(A = j)}
\frac{\bm{1}\{S = 1, A = j\}}{\Pr(S = 1, A = j \mid \*X^{P})}
\E\bigg[
\frac{\bm{1}\{S = 1, A = j \} }{\Pr(S = 1, A = j \mid \*X^{P}, \*X^{S})} Y\ \big|\ \*X^{P}, \*X^{S}
\bigg] \\
&\qquad\qquad \times \Pr(A = j\mid  \*X^{P}, \*X^{S}, S = 1)
\bigg\} \\
& +
\E\bigg\{
\frac{\Pr(A = j\mid \*X^{P})}{\Pr(A = j)}
\frac{\bm{1}\{S = 1, A = j\}}{\Pr(S = 1, A = j \mid \*X^{P})}
\E\bigg[
\frac{\bm{1}\{S = 1, A \neq j \} }{\Pr(S = 1, A \neq j \mid \*X^{P}, \*X^{S})} Y\ \big|\ \*X^{P}, \*X^{S}
\bigg] \\
&\qquad \qquad \times \Pr(A \neq j\mid  \*X^{P}, \*X^{S}, S = 1)
\bigg\} \\
&=
\E\bigg\{
\frac{\Pr(A = j\mid \*X^{P})}{\Pr(A = j)}
\frac{\bm{1}\{S = 1, A = j\}}{\Pr(S = 1, A = j \mid \*X^{P})}
 Y
\Pr(A = j\mid  \*X^{P}, \*X^{S}, S = 1)
\bigg\} \\
&\quad +
\E\bigg\{
\frac{\Pr(A = j\mid \*X^{P})}{\Pr(A = j)}
\frac{\bm{1}\{S = 1, A \neq j\}}{\Pr(S = 1, A = j \mid \*X^{P})}
 Y
\Pr(A = j\mid  \*X^{P}, \*X^{S}, S = 1)
\bigg\} \\
&=
\E\bigg\{
\frac{\bm{1}\{S = 1\}}{\Pr(A = j)}
\frac{\Pr(A = j\mid  \*X^{P}, \*X^{S}, S = 1)}{\Pr(A = j \mid \*X^{P}, S = 1)}
\frac{\Pr(A = j\mid \*X^{P})}{\Pr(S = 1 \mid \*X^{P})}
 Y
\bigg\}
\end{align*}
which completes the proof.
\end{proof}

\subsection{Proof of Corollary~\ref{cor:alternative-identification}}
Let $\pi_{j}(\*X) = \Pr(S = 1 \mid \*X^{P}, A = j)$ and $p_{j}(\*X) = \Pr(A = j \mid \*X^{P}, \*X^{S}, S = 1)$.
Then, the target estimand is also identified as
\begin{align*}
\tau_{j} &=
\E\bigg\{
\frac{\bm{1}\{S = 1, A = j \}}{\Pr(A = j)}
\frac{Y}{\pi_{j}(\*X)}\
p_{j}(\*X)
\bigg\}\\
&\quad +
\E\bigg\{
\frac{\bm{1}\{S = 1, A \neq j \}}{\Pr(A = j)}
\frac{\Pr(A = j \mid \*X^{P}, \*X^{S}, S = 1)}{\Pr(A \neq j \mid \*X^{P}, \*X^{S}, S = 1)}
\frac{Y}{\pi_{j}(\*X)}\
(1 - p_{j}(\*X))
\bigg\}.
\end{align*}

\begin{proof}
From the last part of the proof in Proposition~\ref{prop:synthArea-identification}, we have that
\begin{align*}
\tau_{j} &=
\E\bigg\{
\frac{\Pr(A = j\mid \*X^{P})}{\Pr(A = j)}
\frac{\bm{1}\{S = 1, A = j\}}{\Pr(S = 1, A = j \mid \*X^{P})}
 Y
\Pr(A = j\mid  \*X^{P}, \*X^{S}, S = 1)
\bigg\} \\
&\quad +
\E\bigg\{
\frac{\Pr(A = j\mid \*X^{P})}{\Pr(A = j)}
\frac{\bm{1}\{S = 1, A \neq j\}}{\Pr(S = 1, A = j \mid \*X^{P})}
 Y
\Pr(A = j\mid  \*X^{P}, \*X^{S}, S = 1)
\bigg\}
\end{align*}

Since $\Pr(S = 1, A = j \mid \*X^{P}) = \Pr(S = 1 \mid \*X^{P}, A = j)\Pr(A = j \mid \*X^{P})$, the first term of the right-hand side in the above display
is given by
\begin{equation*}
\E\bigg\{
\frac{\bm{1}\{S = 1, A = j\}}{\Pr(A = j)}
\frac{Y}{\Pr(S = 1 \mid \*X^{P}, A = j)}\ p_{j}(\*X)
\bigg\}.
\end{equation*}
Similarly, by multiplying the second term by
$\Pr(A \neq j \mid \*X^{P}, \*X^{S}, S = 1) / \Pr(A \neq j \mid \*X^{P}, \*X^{S}, S = 1)$, we get
\begin{equation*}
\E\bigg\{
\frac{\bm{1}\{S = 1, A \neq j\}}{\Pr(A = j)}
\frac{Y}{\Pr(S = 1 \mid \*X^{P}, A = j)}
\times \frac{\Pr(A = j \mid \*X^{P}, \*X^{S}, S = 1)}{\Pr(A \neq j \mid \*X^{P}, \*X^{S}, S = 1)}\
(1 - p_{j}(\*X))
\bigg\}
\end{equation*}

Combining the above two expectation, we obtain the result.
\end{proof}

% \paragraph{Proposition~\ref{prop:alternative-area-specific}}
% Suppose that Assumption~\ref{assump:sampling-ignorability} and \ref{assump:area-ignorability} hold.
% Then, the estimand in Equation~\eqref{eq:estimand} is identified as
% \begin{equation}
% \tau_{j}
% = \E\bigg\{
% \frac{\bm{1}\{S = 1\}}{\Pr(A = j)}
% \frac{\Pr(A = j\mid  \*X^{P}, \*X^{S}, S = 1)}{\Pr(S = 1 \mid \*X^{P}, A = j)}
% Y
% \bigg\}.
% \end{equation}
%
% \begin{proof}
% Based on the result in Proposition~\ref{prop:synthArea-identification},
% we can rewrite the denominator as
% \begin{align*}
% \Pr(A = j \mid \*X^{P}, S = 1)\Pr(S = 1 \mid \*X^{P})
% &= \Pr(S = 1 \mid \*X^{P}, A = j)\Pr(A = j \mid \*X^{P})
% \end{align*}
% Substituting the above expression to the denominator, we obtain the result.
% \end{proof}

\setstretch{1.1}
\clearpage
\bibliography{synthArea}

\end{document}